\newcommand{\bra}{\langle}
\newcommand{\ket}{\rangle}
\newcommand{\fall}[1]{{\forall\,{#1},\ }}
\newcommand{\fexist}[1]{{\exists\,{#1}\,{:}\ }}
\newcommand{\mc}[1]{{\mathcal{#1}}}
\newcommand{\mb}[1]{{\bf #1}}
\newcommand{\ortho}{\mathrel{\bot}}
\newcommand{\bbot}{{\bot\!\!\!\bot}}
\newcommand{\Card}{\mathop{\mathrm{Card}}}
\newcommand{\cor}{\text{ or }}
\newcommand{\cand}{\text{ and }}
\newtheorem{theorem}{Theorem}
\newtheorem{definition}{Definition}
\newtheorem{proposition}{Proposition}
\newenvironment{proof}[1][Proof]{\paragraph{{#1}}}%
                {{\hfill\(\Box\)\\}}
\newcommand\At{\mathop{\operator@font At}\nolimits}
\begin{document}

\title{Orthogonality and Dimensionality}
\author{Olivier Brunet\footnote{\texttt{olibrunet at normalesup.org}}}
\maketitle

\begin{abstract}
In this article, we present what we believe to be a simple way to motivate the use of Hilbert spaces in quantum mechanics. To achieve this, we study the way the notion of dimension can, at a very primitive level, be defined as the cardinality of a maximal collection of mutually orthogonal elements (which, for instance, can be seen as spatial directions). Following this idea, we develop a formalism based on two basic ingredients, namely an orthogonality relation and matroids which are a very generic algebraic structure permitting to define a notion of dimension.

Having obtained what we call \emph{orthomatroids}, we then show that, in high enough dimension, the basic ingredients of orthomatroids (more precisely the simple and irreducible ones) are isomorphic to generalized Hilbert lattices, so that the latter are a direct consequence of an orthogonality-based characterization of dimension.
\end{abstract}

\section{Introduction}

One of the most striking peculiarities of the mathematical formulation of quantum mechanics is its heavy reliance on the use of complex Hilbert spaces. In its core formulation, the state of a quantum system is represented by a normalized vector of a Hilbert space and the probabilistic nature of quantum mechanics is formalized by interpreting, using the Born rule, the quantity $\bra \psi | P_i | \psi \ket$ as the probability of obtaining eigenvalue $\lambda_i$ (with $P_i$ denoting the orthogonal projection on the associated eigenspace) when measuring a quantum system in state $|\psi\ket$, thus emphasizing the importance of the inner product.

However, fundamental as it is, it seems that there still exists no consensus regarding any satisfactory justification of the use of Hilbert spaces, so that one might wonder how far we stand from Mackey which had to state, in a rather \emph{ad hoc} manner, its seventh axiom \cite{Mackey57Book}~as \emph{``the partially ordered set (or poset) of all questions in quantum mechanics is isomorphic to the poset of all closed subspaces of a [...
] Hilbert space.''}
This requirement can usually be split into two parts, and many efforts have tried to address both of them: first that the poset of all questions forms an orthomodular lattice, and that this orthomodular lattice is indeed isomorphic to the lattice of all closed subspaces of a Hilbert space.

Regarding orthomodularity, described by Beltrametti and Cassinelli \cite{Beltrametti81:LogicQM} as \emph{``the survival [...] of a notion of the logical conditional, which takes the place of the classical implication associated with Boolean algebra''}, some attempts to justify this as a property verified by the poset of all questions include works by Grinbaum \cite{Grinbaum2005FPL} based on the notion of \emph{``relevant information''}
and by the author of the present article \cite{Brunet04IQSA} where it is assumed that the considered lattice ensures the definition of sufficiently many \emph{``points of view.''}

As for the second requirement, a central result, Piron's celebrated Representation Theorem, provides conditions for an orthomodular lattice to be isomorphic to the lattice of all closed subspaces of a Hilbert space (or, more precisely, to a \emph{generalized} Hilbert space, that is where it is not required that the used field is a ``classical'' one) \cite{piron64,Piron76Book,Stubbe:2007}. However, one of those conditions, namely the \emph{covering law}, \emph{``presents a [...] delicate problem. [...] It is probably safe to say that no simple and entirely compelling argument has been given for assuming its general validity''}~\cite{Wilce_QuantLog_Plato}.

\ 

In the present article, we present some simple geometric conditions which are sufficient to ensure that the hypotheses of Piron's Representation Theorem are verified. Our approach is based on the notion of dimension and, more precisely, on the fact that this very notion can indeed be defined. Actually, it is taken for granted to the point that this requirement, if not implicite, often appears at the very beginning of axiomatic formulations of physics and, more specifically, of quantum mechanics. For instance, in \cite{Hardy-Axioms}, Hardy assumes that one can define a \emph{number of degrees of freedom} $K$ before presenting the \emph{five reasonable axioms} of the article. In Rovelli's Relational Quantum Mechanics \cite{Rovelli96RQM}, Postulate 1 states that
\emph{``there is a maximum amount of relevant information that can be extracted from a system,''}
which can be interpreted as the existence of the dimension of a system.

Usually, mathematical formulations of physics involve the use of euclidian or Hilbert spaces, where the dimension is defined as the common cardinality of every bases of the considered space. This definition can even be refined by restricting it to orthogonal bases. In other words, the dimension of a euclidian or Hilbert space is the common cardinality of every maximal set of mutually orthogonal vector rays (which we will call a maximal \emph{orthoindependant} set of vector rays). The use of orthogonal bases is a key element on our approach. Intuitively, they constitute a natural set of preferred bases. Mathematically, while the linear independance of the elements of a basis requires to consider the basis as a whole, it can be done in a finitary way for orthogonal bases, since a set is orthoindependant if, and only if, its finite subsets are.

However, while this remark highlights the important role played, in such a context, by orthogonality, it is also clear that this notion can be defined without relying upon the sophisticated machinery of linear and bilinear algebra. Indeed, one can easily imagine a greek philosopher who would, during the Antiquity, justify that the space surrounding him is 3-dimensional by exhibiting three orthogonal rods and by arguing that he cannot add another rod orthogonal to the first three.

To that respect, we believe that orthogonality should be considered as a primitive geometric notion and we will, in this article, study the way it can lead to a reasonable definition of dimension (which, indeed, is also basically a geometric notion). 

\ 

Our primary tool will be the theory of matroids \cite{Kung82:MatroidSourcebook,Oxley:MatroidTheory} which is a very natural algebraic framework for dealing with dimension. This theory is based on a general notion of independence (such as the linear independence of the elements of a basis in linear algebra) which we shall define here using orthogonality. Moreover, following our previous discussion about the role of orthogonal bases, we will add the requirement that orthogonal bases do exist and, more precisely, that any set of mutually orthogonal elements (i.e. any orthoindependant subset) can be completed into an orthogonal basis. This will constitute the next section. We will then show that the obtained algebraic structure, which we call \emph{orthomatroids}, is closely related to Piron's propositional systems, and finally we will present an adaptation of Piron's Representation Theorem to orthomatroids.

\section{Orthomatroids}

Let us first define the notion of orthogonality on a~set.

\begin{definition}
Given a set $E$, a binary relation $\bot$ on $E$ is an \emph{orthogonality relation} if, and only if it verifies
\begin{align*}
& \fall {a, b \in E} a \ortho b \iff b \ortho a & \hbox{Symmetry} \\
& \fall {a, b \in E} a \ortho b \implies a \neq b & \hbox{Anti-reflexivity}
\end{align*}
In the following, such a pair $(E, \bot)$ will be called an \emph{orthoset}.
\end{definition}

The basic intuition behind an orthoset $(E, \bot)$ is to think of its elements as spatial directions. Obviously, the study of such structures is neither new nor original. One might refer for instance to the survey in \cite{Wilce:Handbook}. However, we present some basic results in order to introduce the important ideas before considering orthosets in the light of matroids.

\begin{definition}
Given an orthoset $(E, \bot)$ and a subset $F \in \wp(E)$, we define its \emph{orthogonal complement} $F^\bot$~as
$$ F^\bot = \{ x \in E \mathbin\vert \fall {y \in F} x \ortho y \}, $$
and its (bi-orthogonal) closure $F^{\bbot} = \bigl(F^\bot\bigr)^\bot$.
\end{definition}

\begin{proposition}
Given an orthoset $(E, \bot)$, we~have
\begin{align*}
\fall {F \in \wp(E)} & F \cap F^\bot = \emptyset \\
\fall {F, G \in \wp(E)} & G \subseteq F^\bot \iff F \subseteq G^\bot & \mathrm{(GC)}
\end{align*}
\end{proposition}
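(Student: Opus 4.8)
The plan is to prove the two claims in turn, both of which follow directly from the definitions.

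For the first claim, that $F \cap F^\bot = \emptyset$ for every $F \in \wp(E)$, I would argue by contradiction. Suppose $x \in F \cap F^\bot$. Since $x \in F^\bot$, by definition $x \ortho y$ holds for every $y \in F$; in particular, taking $y = x$ (which is legitimate since $x \in F$), we obtain $x \ortho x$. But anti-reflexivity of the orthogonality relation states precisely that $a \ortho b$ implies $a \neq b$, so $x \ortho x$ is impossible. Hence no such $x$ exists and the intersection is empty. This is the easy part.

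For the second claim, the Galois connection property (GC), I would prove the two implications symmetrically. Assume $G \subseteq F^\bot$ and show $F \subseteq G^\bot$: pick any $x \in F$; to conclude $x \in G^\bot$ I must check $x \ortho y$ for every $y \in G$. Given such a $y$, the hypothesis $G \subseteq F^\bot$ yields $y \in F^\bot$, which unfolds to $y \ortho z$ for all $z \in F$, and in particular $y \ortho x$ since $x \in F$. By symmetry of $\bot$, this gives $x \ortho y$, as required. The converse implication $F \subseteq G^\bot \implies G \subseteq F^\bot$ is obtained by the same argument with the roles of $F$ and $G$ exchanged; alternatively, since the statement is symmetric in $F$ and $G$, one implication suffices and the other follows by relabelling.

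I do not anticipate any genuine obstacle here, as the entire argument is a routine unwinding of the definition of $F^\bot$ together with the symmetry and anti-reflexivity axioms of the orthogonality relation. The only point deserving slight care is the explicit use of symmetry in establishing (GC): the definition of $F^\bot$ records orthogonality of its members against all elements of $F$, but the membership $x \in G^\bot$ requires orthogonality stated in the opposite direction, so the symmetry axiom is exactly what bridges the two. It is worth noting that (GC) expresses that $F \mapsto F^\bot$ is an order-reversing Galois connection from $\wp(E)$ to itself, from which standard consequences such as $F \subseteq F^{\bbot}$ and the idempotence of the closure operator $F \mapsto F^{\bbot}$ would follow, though these are not part of the present statement.
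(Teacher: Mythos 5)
Your proof is correct and follows essentially the same route as the paper: the first claim by deriving the impossible $x \ortho x$ from anti-reflexivity, and (GC) by unwinding both inclusions to the symmetric condition that every element of $F$ is orthogonal to every element of $G$. The paper's version is merely more compressed, stating both sides of (GC) as equivalent to $\fall{x \in F}\fall{y \in G} x \ortho y$ in one line.
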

\begin{proof}
For $F \in \wp(E)$ and $x \in F$, having $x \in F^\bot$ would imply that $\fall{y \in F} x \ortho y$ and, in particular, $x \ortho x$ which is not possible. As a consequence, $F \cap F^\bot = \emptyset$. Now, (GC) trivially follows from the symmetry of our orthogonality relation, since
$$ G \subseteq F^\bot \iff \fall {x \in F} \fall {y \in G} x \ortho y \iff F \subseteq G^\bot. $$
\end{proof}
This property shows that there is an antitone Galois connection 
\cite{Davey90Book,Erne92GaloisPrimer} between $E$ and itself, realized both ways by ${\ \cdot\ }^\bot : F \mapsto F^\bot$. A direct consequence of this is the following result.
\begin{proposition}
The bi-orthogonal closure operation is indeed a closure operator on $E$, that is it verifies
\begin{align*}
  \fall {F \in \wp(E)} & F \subseteq F^\bbot & \mathrm{Extensivity} \\
  \fall {F, G \in \wp(E)} & F \subseteq G \implies F^\bbot \subseteq G^\bbot \vphantom{\bigr)^{\bot}} & \mathrm{Monotony} \\
  \fall {F \in \wp(E)} & \bigl(F^\bbot\bigr)^\bbot = F^\bbot & \mathrm{Idempotence}
\end{align*}
\end{proposition}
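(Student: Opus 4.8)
The plan is to derive all three properties from the antitone Galois connection (GC) established in the previous proposition, since these are exactly the standard consequences of any such connection. The key observation is that $F \mapsto F^\bot$ is order-reversing: if $F \subseteq G$, then any $x$ orthogonal to every element of $G$ is in particular orthogonal to every element of $F$, so $G^\bot \subseteq F^\bot$. This antitonicity, together with (GC), is the only machinery I expect to need; each of the three displayed properties should fall out in a line or two.

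\begin{proof}
We prove each property in turn, relying on the antitone Galois connection from the previous proposition. First observe that $\blank^\bot$ is order-reversing: if $F \subseteq G$ then $G^\bot \subseteq F^\bot$, since any $x$ orthogonal to all of $G$ is a fortiori orthogonal to all of $F$.

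For \emph{Extensivity}, apply (GC) with $G = F^\bot$. Since $F^\bot \subseteq F^\bot$ holds trivially, the equivalence $G \subseteq F^\bot \iff F \subseteq G^\bot$ yields $F \subseteq \bigl(F^\bot\bigr)^\bot = F^\bbot$.

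For \emph{Monotony}, suppose $F \subseteq G$. Applying the order-reversing property of $\blank^\bot$ once gives $G^\bot \subseteq F^\bot$, and applying it a second time gives $\bigl(F^\bot\bigr)^\bot \subseteq \bigl(G^\bot\bigr)^\bot$, that is $F^\bbot \subseteq G^\bbot$.

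For \emph{Idempotence}, we show both inclusions. On the one hand, applying Extensivity to the set $F^\bbot$ gives $F^\bbot \subseteq \bigl(F^\bbot\bigr)^\bbot$. On the other hand, Extensivity applied to $F$ gives $F \subseteq F^\bbot$; applying $\blank^\bot$ (which reverses inclusions) yields $\bigl(F^\bbot\bigr)^\bot \subseteq F^\bot$, and applying $\blank^\bot$ once more gives $\bigl(F^\bot\bigr)^\bot \subseteq \bigl(\bigl(F^\bbot\bigr)^\bot\bigr)^\bot$, which unfolds to $F^\bbot \subseteq \bigl(F^\bbot\bigr)^\bbot$ as well. Combining the two inclusions gives $\bigl(F^\bbot\bigr)^\bbot = F^\bbot$.
\end{proof}

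I do not anticipate a genuine obstacle here, as the result is the classical fact that the double dual of an antitone Galois connection is a closure operator; the only point requiring slight care is the idempotence argument, where one must apply $\blank^\bot$ an odd number of times to each starting inclusion and track the reversals correctly, since a single misstep in parity flips the direction of an inclusion.
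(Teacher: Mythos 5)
Your Extensivity and Monotony arguments are correct and essentially follow the paper's route (the paper obtains antitonicity of $F \mapsto F^\bot$ from (GC) together with Extensivity, whereas you read it off directly from the definition; either is fine). The genuine gap is in Idempotence: you announce both inclusions, but both of your arguments prove the same, trivial one. Your ``other hand'' starts from $F \subseteq F^\bbot$ and applies $\blank^\bot$ \emph{twice}; the two reversals cancel, so the chain $\bigl(F^\bbot\bigr)^\bot \subseteq F^\bot$ followed by $\bigl(F^\bot\bigr)^\bot \subseteq \bigl(\bigl(F^\bbot\bigr)^\bot\bigr)^\bot$ unfolds to $F^\bbot \subseteq \bigl(F^\bbot\bigr)^\bbot$ --- exactly the inclusion you already had from Extensivity. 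The substantive half, $\bigl(F^\bbot\bigr)^\bbot \subseteq F^\bbot$, is never established, and it is the only half that needs an argument. (You warned yourself that $\blank^\bot$ must be applied an odd number of times to the starting inclusion; you then applied it an even number.)

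The repair is short. Apply Extensivity to the set $F^\bot$ to get $F^\bot \subseteq \bigl(F^\bot\bigr)^\bbot = \bigl(F^\bbot\bigr)^\bot$, then apply $\blank^\bot$ \emph{once} (reversing the inclusion) to obtain $\bigl(\bigl(F^\bbot\bigr)^\bot\bigr)^\bot \subseteq \bigl(F^\bot\bigr)^\bot$, i.e.\ $\bigl(F^\bbot\bigr)^\bbot \subseteq F^\bbot$. The paper instead reduces this inclusion via (GC) to $F^\bot \subseteq \bigl(\bigl(F^\bbot\bigr)^\bbot\bigr)^\bot = \bigl(\bigl(F^\bot\bigr)^\bbot\bigr)^\bbot$, which holds by two applications of Extensivity; either finish is acceptable, but as written your proof does not close the idempotence claim.
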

\begin{proof}
As stated above, this is a direct consequence of having a Galois connection. However, the direct proof of those facts is rather easy. Extensivity trivially follows from (GC), since
$$ F \subseteq F^\bbot \iff F^\bot \subseteq F^\bot. $$
Now, if $F \subseteq G$, then $F \subseteq G^\bbot$ and hence $G^\bot \subseteq F^\bot$. By applying this deduction once again, we obtain
$$ F \subseteq G \implies G^\bot \subseteq F^\bot \implies F^\bbot \subseteq G^\bbot. $$ 
Finally, for idempotence, we only need to prove that $\bigl(F^\bbot\bigr)^\bbot \subseteq F^\bbot$ which is equivalent, because of (GC), to $F^\bot \subseteq \bigl((F^\bbot)^\bbot\bigr)^\bot = \bigl((F^\bot)^\bbot\bigr)^\bbot$.
\end{proof}

Following our initial discussion, we now want to turn an orthoset $(E, \bot)$ into a matroid in order to have a suitable definition of dimension. Since we have defined a closure operation on $E$ using our orthogonality relation, it is natural to use the closure operator-based definition of a matroid. In that case, we only need to demand that the Mac\,Lane--Steinitz Exchange Property is verified by $\cdot\ ^\bbot$:
$$ \fall{F \in \wp(E)} \fall {x, y \in E} x \in (F + y)^\bbot \setminus F^\bbot \implies y \in (F + x)^\bbot. \eqno{\mathrm{(EP)}} $$
Here, $F + x$ denotes the set $F \cup \{x\}$. With this definition, a subset $F$ of $E$ is independent if, and only~if (where $F - x$ denotes $F \setminus \{x\}$)
$$ \fall {x \in F} x \not \in (F - x)^\bbot. $$

\ 

The next step is to add the possibility of defining orthobases in a convenient way. To that respect, we first have to focus on subsets made of mutually orthogonal elements.

\begin{definition}
A subset $F$ of $E$ is said to be \emph{orthoindependent} if, and only~if
$$ \fall {x, y \in F} x \neq y \implies x \ortho y. $$
\end{definition}
Obviously, every orthoindependent subset is independent. Moreover, orthoindependent subsets verify the nice following property:
\begin{proposition} \label{prop:zorn}
If $\{F_i\}$ is a chain of orthoindependent subsets, then $\bigcup F_i$ is also orthoindependent.
\end{proposition}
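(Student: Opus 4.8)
The plan is straightforward since orthoindependence is fundamentally a finitary, pairwise condition. Looking at the definition, a subset $F$ is orthoindependent precisely when any two distinct elements of it are orthogonal. So to show $\bigcup F_i$ is orthoindependent, I need only verify that for any two distinct elements $x, y \in \bigcup F_i$, we have $x \ortho y$.

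First I would pick arbitrary distinct elements $x, y \in \bigcup F_i$. By definition of union, there exist indices $i$ and $j$ such that $x \in F_i$ and $y \in F_j$. Since $\{F_i\}$ is a chain under inclusion, the two sets are comparable, so either $F_i \subseteq F_j$ or $F_j \subseteq F_i$. In either case, both $x$ and $y$ belong to a single common member of the chain --- say $F_k$ where $k$ is whichever index gives the larger set.

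Now I would invoke the orthoindependence of that single set $F_k$. Since $x, y \in F_k$ with $x \neq y$, the orthoindependence of $F_k$ gives directly $x \ortho y$. As $x$ and $y$ were arbitrary distinct elements of the union, this shows $\bigcup F_i$ is orthoindependent, completing the proof.

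I do not expect any real obstacle here: the key observation is simply that orthoindependence is a condition on \emph{pairs}, and any pair drawn from a union over a chain already lies inside one member of the chain. This is precisely the ``finitary'' character alluded to in the introduction, where it is remarked that a set is orthoindependent if and only if its finite subsets are. The mild point worth noting is that one must handle the empty chain (or the case where the union is empty) as vacuously orthoindependent, but this is automatic since the defining implication holds trivially when there are no distinct pairs to check.
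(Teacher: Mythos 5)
Your proof is correct: the pairwise character of orthoindependence means any two distinct elements of the union lie in a common member of the chain, and its orthoindependence finishes the argument. The paper in fact omits the proof of this proposition entirely (treating it as immediate), and your argument is exactly the standard one that fills that gap, including the correct observation that the empty case is vacuous.
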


By application of Zorn's Lemma, the previous proposition implies that there exists maximal ortho\-independent subsets and even that every orthoindependent subset is included in a maximal one. 

\ 

As explained in the introduction, we will demand that every closed subset $F^\bbot$ of $E$ admits an orthobasis. We will even demand the stronger condition that every orthoindependent subset $I$ of $F^\bbot$ can be completed into an orthobasis $B$ of $F^\bbot$, which we formally state~as
\begin{quotation}
(OB) Given a subset $F$ of $E$ and an orthoindependent subset $I$ of $F^\bbot$, there exists an orthoindependent subset $B$ such that $I \subseteq B$ and $B^\bbot = F^\bbot$.
\end{quotation}

The requirement that \emph{any} orthoindependent subset can be completed into an orthobasis is just a form of isotropy, stating that there are no ``privileged'' orthoindependent subsets. Moreover, considering Proposition \ref{prop:zorn}, $(OB)$ is equivalent to the statement that every maximal orthoindependent subset is indeed an orthobasis.

In the next proposition, we provide an equivalent and convenient way to state this condition.

\begin{proposition}
If $(E, \bot)$ verifies the MacLane\,--\,Steinitz Exchange Property, then axiom (OB) is equivalent to the \emph{Straightening Property}, which we define~as
$$ \fall {F \in \wp(E)} \fall {x \in E} x \not \in F^\bbot \implies \fexist {y \in F^\bot} x \in (F + y)^\bbot. \eqno{\mathrm{(SP)}}$$
\end{proposition}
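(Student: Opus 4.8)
The plan is to prove the two implications separately, letting the Exchange Property (EP) do the decisive work in each direction while Proposition~\ref{prop:zorn} and Zorn's Lemma supply the maximal orthoindependent sets needed to invoke (OB). Throughout I will freely use the Galois-connection facts already established, in particular monotony and idempotence of $\cdot^\bbot$, the disjointness $S \cap S^\bot = \emptyset$, and its routine consequence $(S^\bbot)^\bot = S^\bot$ (so that $I^\bbot = F^\bbot$ forces $I^\bot = F^\bot$). For the implication (OB)\,$\Rightarrow$\,(SP), fix $F \in \wp(E)$ and $x \in E$ with $x \notin F^\bbot$. The key observation is that it suffices to exhibit a single element $y \in F^\bot \cap (F+x)^\bbot$: applying the disjointness property to $S = F^\bot$ gives $F^\bot \cap F^\bbot = \emptyset$, so such a $y$ automatically lies outside $F^\bbot$, whence $y \in (F+x)^\bbot \setminus F^\bbot$, and reading (EP) with the roles of $x$ and $y$ interchanged immediately yields $x \in (F+y)^\bbot$, which is the conclusion of (SP). To produce $y$, I would first take a maximal orthoindependent subset $I$ of $F^\bbot$; feeding $F$ and $I$ to (OB) and using maximality forces $I^\bbot = F^\bbot$, hence $I^\bot = F^\bot$. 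Applying (OB) a second time to the set $F+x$ and to $I$ (legitimate since $I \subseteq F^\bbot \subseteq (F+x)^\bbot$) produces an orthoindependent $B \supseteq I$ with $B^\bbot = (F+x)^\bbot$. Since $x \in B^\bbot$ but $x \notin I^\bbot = F^\bbot$, we have $B \neq I$; choosing any $y \in B \setminus I$, orthoindependence of $B$ gives $y \ortho z$ for every $z \in I$, so $y \in I^\bot = F^\bot$, while $y \in B \subseteq (F+x)^\bbot$. This $y$ has exactly the two required properties.

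For the converse (SP)\,$\Rightarrow$\,(OB), given $F$ and an orthoindependent $I \subseteq F^\bbot$, I would extend $I$, using Proposition~\ref{prop:zorn} and Zorn's Lemma, to a maximal orthoindependent subset $B$ of $F^\bbot$, and then show $B^\bbot = F^\bbot$. The inclusion $B^\bbot \subseteq F^\bbot$ is immediate from $B \subseteq F^\bbot$ by monotony and idempotence. For the reverse inclusion I argue by contradiction: were there some $x \in F^\bbot \setminus B^\bbot$, then (SP) applied to $B$ would furnish $y \in B^\bot$ with $x \in (B+y)^\bbot$; since $x \notin B^\bbot$ this reads $x \in (B+y)^\bbot \setminus B^\bbot$, so (EP) gives $y \in (B+x)^\bbot$. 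As $B + x \subseteq F^\bbot$, monotony and idempotence yield $y \in (B+x)^\bbot \subseteq F^\bbot$. But $y \in B^\bot$ means $y \ortho b$ for all $b \in B$, and by disjointness $y \notin B$; hence $B + y$ would be an orthoindependent subset of $F^\bbot$ strictly larger than $B$, contradicting maximality. Therefore $B^\bbot = F^\bbot$, so $B$ is an orthobasis of $F^\bbot$ containing $I$, which is (OB).

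The step I expect to require the most care is the passage to a \emph{single} witness $y$ in the first implication. Having completed $I$ to an orthobasis $B$ of $(F+x)^\bbot$, one is tempted to worry that $x$ might ``depend'' on several elements of $B \setminus I$ simultaneously, which would seem to call for a rank- or cardinality-counting argument on $B \setminus I$. The decisive simplification is that (EP) removes this concern entirely: it converts the trivial membership $y \in (F+x)^\bbot \setminus F^\bbot$ into the desired $x \in (F+y)^\bbot$ for \emph{any} such $y$, so no control over the size of $B \setminus I$ is ever needed. Beyond this, the only genuine subtleties are the closure-theoretic identities — chiefly $I^\bbot = F^\bbot \Rightarrow I^\bot = F^\bot$ and $F^\bot \cap F^\bbot = \emptyset$ — all of which are direct consequences of the Galois connection recorded in the preceding propositions.
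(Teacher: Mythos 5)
Your proposal is correct and follows essentially the same route as the paper: in the forward direction you complete an orthobasis of $F^\bbot$ to one of $(F+x)^\bbot$ and apply the Exchange Property to any new element, and in the converse direction you take a maximal orthoindependent extension of $I$ inside $F^\bbot$ and derive a contradiction with maximality via (SP) and (EP). The only cosmetic differences are that you verify explicitly (via the Galois-connection identities) a few steps the paper leaves implicit, and that the paper packages the (EP) step of the converse as an ``equivalent restatement'' of (SP) rather than invoking (EP) directly.
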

\begin{proof}
Suppose first that (OB) holds, and let $F \in \wp(E)$ and $x \in E$ be such that $x \not \in F^\bbot$. Moreover, let $B$ be an orthobasis of $F^\bbot$ which we extend into an orthobasis $B'$ of $(F + x)^\bbot$, using (OB). For $y \in B' \setminus B$, it is clear from the orthoindependence of $B'$ that $y \in F^\bot$. This means that $y \in (F + x)^\bbot \setminus F^\bbot$ so that using the Exchange Property, we get $x \in (F + y)^\bbot$.

Conversely, suppose that the Straightening Property is verified. It can be remarked that, because of the Exchange Property, it can be equivalently stated~as
$$ \fall {F \in \wp(E)} \fall {x \in E} x \not \in F^\bbot \implies \fexist {y \in F^\bot} (F + x)^\bbot = (F + y)^\bbot. $$
Now, given an orthoindependent subset $I$ of $F^\bbot$, let $J$ be a maximal orthoindependent subset of $F^\bbot$ such that $I \subseteq J$. If $J^\bbot \neq F^\bbot$, then there exists an element $x \in F^\bbot \setminus J^\bbot$ and, following from the Straightening Property, there exists an element $y \in J^\bot$ such that $(J + x)^\bbot = (J + y)^\bbot$. This implies in particular that $y \in F^\bbot$ and hence $J + y$ is also an orthoindependent subset of $F^\bbot$ which is absurd since $J$ was supposed maximal. As a consequence, we have $J^\bbot = F^\bbot$, that is $J$ is an orthobasis of $F^{\bbot}$.
\end{proof}

We now summarize all these properties into what we call an \emph{orthomatroid}:

\begin{definition}[Orthomatroid]
An \emph{orthomatroid} is an orthoset $(E, \bot)$ which verifies the two following properties~:
\begin{enumerate}
  \item Exchange Property
  $$ \fall {F \in \wp(E)} \fall {x,y \in E} x \in (F + y)^\bbot \setminus F^\bbot \implies y \in (F + x)^\bbot $$
  \item Straightening Property
  $$ \fall {F \in \wp(E)} \fall {x \in E} x \not \in F^\bbot \implies \fexist {y \in F^\bot} x \in (F + y)^\bbot $$
\end{enumerate}
\end{definition}
Moreover, in order to talk about orthomatroids \emph{up to isomorphism}, we will say that two orthomatroids $(E_1, \bot_1)$ and $(E_2, \bot_2)$ are \emph{orthoisomorphic} if there exists a bijection $\varphi : E_1 \rightarrow E_2$ such~that
$$ \fall {x, y \in E_1} x \ortho_1 y \iff \varphi(x) \ortho_2 \varphi(y). $$

We believe that orthomatroids provide a reasonable answer to the initial objective of formalizing a notion of orthogonality-based dimension. Here, the Exchange Property ensures that a correct notion of dimension can be defined: given a  independent subset $I$, every independent subset $J$ verifying $I^\bbot = J^\bbot$ has the same cardinality as $I$ (which we call the rank of the orthomatroid). Moreover, considering orthoindependent subsets (which can be seen some sort of ``preferred'' independent subsets), the Straightening Property ensures that every closed subset admits an orthobasis and even that every orthoindependent subset can be completed into an orthobasis.

\ 

In the next section, we will study some properties of the lattices associated to orthomatroids, and show their relationship with propositional systems.

\section{The Lattice associated to an Orthomatroid}

\begin{definition}
Given an orthomatroid $M = (E, \bot)$, we define the lattice $\mc L(M)$ associated to $M$ as the set $\{ F^\bbot \mathbin\vert F \in \wp(E) \}$ of its closed subsets ordered by inclusion.
\end{definition}

Since $\mc L(M)$ is defined as the set of closed elements of a closure operator, it is a complete lattice, with operations
$$ P \wedge Q = P \cap Q \qquad P \vee Q = (P^\bot \cap Q^\bot)^\bot = (P \cup Q)^\bbot. $$
It is also clearly atomistic (meaning that every element is the join of its atoms), and it is an ortholattice with $\ \cdot\ ^\bot$ as orthocomplementation.

\medskip

We now present two important properties that are verified by $\mc L(M)$.

\begin{proposition}[Orthomodularity]
The ortholattice $\mc L(M)$ is orthomodular, that is it verifies
$$ \fall {P, Q \in \mc L(M)} P \leq Q \implies P = Q \wedge (P \vee Q^\bot). $$
\end{proposition}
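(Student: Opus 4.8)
The plan is to prove the two inclusions of the claimed identity separately. One of them is free: from $P \leq Q$ we have $P \subseteq Q$, and trivially $P \subseteq P \vee Q^\bot$, so $P \subseteq Q \cap (P \vee Q^\bot) = Q \wedge (P \vee Q^\bot)$. All the content therefore lies in the reverse inclusion $Q \wedge (P \vee Q^\bot) \subseteq P$, which I would establish by contradiction. I fix some $x \in Q \cap (P \vee Q^\bot)$ with $x \notin P$ and aim to produce a single element that is simultaneously orthogonal and non-orthogonal to $x$. The engine is the interplay of the two orthomatroid axioms: since $P = P^\bbot$ is closed and $x \notin P$, the Straightening Property applied to $F = P$ hands me a witness $y \in P^\bot$ with $x \in (P + y)^\bbot$; and as $x \in (P+y)^\bbot \setminus P^\bbot$, the Exchange Property then yields $y \in (P + x)^\bbot$. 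This $y$ is the element I will play off against $x$.

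Next I would pin down where $y$ lives. On one hand $y \in P^\bot$ by construction. On the other hand, since $P \subseteq Q$ and $x \in Q$ we have $P + x \subseteq Q$, so by monotony of the closure $y \in (P+x)^\bbot \subseteq Q^\bbot = Q$. Reading $y \in P^\bot$ together with $y \in Q = (Q^\bot)^\bot$ shows that $y$ is orthogonal to every element of $P$ and to every element of $Q^\bot$, that is $y \in (P \cup Q^\bot)^\bot$. Since $x \in P \vee Q^\bot = (P \cup Q^\bot)^\bbot$ is by definition orthogonal to everything in $(P \cup Q^\bot)^\bot$, I conclude $x \ortho y$.

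Finally I would derive the opposite and close the loop. If $x \ortho y$ held then, combined with $y \in P^\bot$, we would get $y \in P^\bot \cap \{x\}^\bot = (P + x)^\bot$; but we already have $y \in (P + x)^\bbot$, and the disjointness $G \cap G^\bot = \emptyset$ established in the first Proposition (taken with $G = (P+x)^\bot$, so that $G^\bot = (P+x)^\bbot$) forbids this, a contradiction. Hence $x \notin P$ is untenable, giving $Q \wedge (P \vee Q^\bot) \subseteq P$ and thus equality. I expect the main obstacle to be choosing and controlling the witness $y$: the crux is recognising that $x \in Q$ is exactly what forces $y$ into $Q$ (through $P + x \subseteq Q$ and monotony), while $x \in P \vee Q^\bot$ is exactly what converts the membership $y \in (P \cup Q^\bot)^\bot$ into the relation $x \ortho y$. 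Once the orthocomplement bookkeeping is arranged in this way, the anti-reflexivity-driven emptiness of $(P+x)^\bot \cap (P+x)^\bbot$ finishes the argument routinely.
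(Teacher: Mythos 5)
Your proposal is correct and follows essentially the same route as the paper's proof: assume $x \in Q \wedge (P \vee Q^\bot)$ with $x \notin P$, use the Straightening and Exchange Properties to produce $y \in P^\bot \cap (P+x)^\bbot$, show $y \in Q$ and hence $y \ortho x$, and derive the contradiction $y \in (P+x)^\bot \cap (P+x)^\bbot$. The only (cosmetic) difference is that you obtain $x \ortho y$ directly from $y \in (P \cup Q^\bot)^\bot$ and $x \in (P \cup Q^\bot)^\bbot$, where the paper phrases the same step through the lattice identity $(Q \wedge P^\bot) \vee Q^\bot = \bigl(Q \wedge (P \vee Q^\bot)\bigr)^\bot$.
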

\begin{proof}
Let $P$ and $Q$ be in $\mc L(M)$ such that $P \leq Q$. Clearly, $P \leq Q \wedge (P \vee Q^\bot)$. Conversely, let $x$ be in $ Q \wedge (P \vee Q^\bot)$ and suppose that $x \not \in P$. One can then define $y \in P^\bot$ such that $(P + y)^\bbot = (P + x)^\bbot$. In particular, $y \in Q$ so that $y \in Q \wedge P^\bot$ and hence $y \in (Q \wedge P^\bot) \vee Q^\bot = \bigl(Q \wedge (P \vee Q^\bot)\bigr)^\bot$. Since $x \in Q \wedge (P \vee Q^\bot)$, this implies that $y \ortho x$. But then, from $y \in P^\bot$ and $y \ortho x$, one can deduce that $y \in (P + x)^\bot$ which is absurd since $y \in (P + x)^\bbot$. As a consequence, we have shown that $\fall {x \in E} x \in Q \wedge (P \vee Q^\bot) \implies x \in P$ and finally that $P = Q \wedge (P \vee Q^\bot)$.
\end{proof}

\begin{proposition}[Atom-covering]
The ortholattice $\mc L(M)$ verifies the atom-cover\-ing property, that is for all $F \in \wp(E)$ and $x \in E$, if $x \not \in F^\bbot$, then $(F + x)^\bbot$ covers $F^\bbot$.
\end{proposition}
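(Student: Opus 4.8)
The plan is to reduce to the case of a flat and then run the standard matroid argument powered by the Exchange Property. First I would observe that it suffices to treat the closed set $P = F^\bbot$ in place of $F$, since $(F+x)^\bbot = (P+x)^\bbot$. Indeed, $F \subseteq P$ gives $(F+x)^\bbot \subseteq (P+x)^\bbot$ by Monotony, while $P \subseteq (F+x)^\bbot$ (Monotony) together with $x \in (F+x)^\bbot$ (Extensivity) yields $P + x \subseteq (F+x)^\bbot$, hence $(P+x)^\bbot \subseteq \bigl((F+x)^\bbot\bigr)^\bbot = (F+x)^\bbot$ by Monotony and Idempotence. So from now on $P$ is closed and $x \not\in P$.

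Next I would record that $(P+x)^\bbot$ strictly contains $P$: Monotony gives $P = P^\bbot \subseteq (P+x)^\bbot$, and the inclusion is strict because $x \in (P+x)^\bbot$ by Extensivity whereas $x \not\in P$. Since the elements of $\mc L(M)$ are exactly the closed subsets, proving that $(P+x)^\bbot$ covers $P$ amounts to showing that no closed set lies strictly between $P$ and $(P+x)^\bbot$.

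For this, let $Q$ be a closed subset with $P \subseteq Q \subseteq (P+x)^\bbot$ and $Q \neq P$, and pick $z \in Q \setminus P$. Then $z \in (P+x)^\bbot \setminus P^\bbot$, so the Exchange Property—applied with $F = P$, with $z$ playing the role of the tested element and $x$ the role of the added one—yields $x \in (P+z)^\bbot$. Because $z \in Q$ and $P \subseteq Q$, we have $P + z \subseteq Q$, whence $(P+z)^\bbot \subseteq Q^\bbot = Q$ by Monotony and the closedness of $Q$; in particular $x \in Q$. Then $P + x \subseteq Q$ gives $(P+x)^\bbot \subseteq Q$, and combined with $Q \subseteq (P+x)^\bbot$ this forces $Q = (P+x)^\bbot$. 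Hence $(P+x)^\bbot$ covers $P$, and therefore $(F+x)^\bbot$ covers $F^\bbot$.

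I expect no genuine difficulty here: the Straightening Property plays no role, and the whole argument rests on the Exchange Property. The only point demanding care is the bookkeeping in the reduction to a flat, and the correct matching of variables when invoking (EP)—so that the element $z$ drawn from the intermediate flat is substituted for the $x$ appearing in the hypothesis $x \in (F+y)^\bbot \setminus F^\bbot$, and our $x$ is substituted for the $y$ appearing in the conclusion $y \in (F+x)^\bbot$.
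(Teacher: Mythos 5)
Your proof is correct and follows essentially the same route as the paper's: pick an element of the intermediate closed set lying outside $F^\bbot$, apply the Exchange Property to it, and conclude that the intermediate set already equals $(F+x)^\bbot$. The preliminary reduction to the flat $P = F^\bbot$ is harmless but unnecessary, since the argument works verbatim with $F$ itself.
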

\begin{proof}
For $F \in \wp(E)$ and $x \in E \setminus F^\bbot$, let $G$ be such that $F^\bbot < G^\bbot \leq (F + x)^\bbot$. We have to show that $G^\bbot = (F + x)^\bbot$. But since $F^\bbot < G^\bbot$, one can define $y \in G^\bbot \setminus F^\bbot$. We then~have
$$ (F + y)^\bbot \leq G^\bbot \leq (F + x)^\bbot. $$
But then, $y \in (F + x)^\bbot \setminus F^\bbot$ which implies that $(F + x)^\bbot = (F + y)^\bbot$ and finally that $G^\bbot = (F + x)^\bbot$.
\end{proof}

This shows that if $M$ is an orthomatroid, then its associated lattice $\mc L(M)$ is a complete atomistic orthomodular lattice that satisfies the covering law or, following Piron's terminology \cite{piron64,Piron76Book}, $\mc L(M)$ is a \emph{propositional system}:

\begin{theorem}
The lattice $\mc L(M)$ associated to any orthomatroid $M$ is a propositional system.
\end{theorem}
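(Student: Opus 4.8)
The plan is to observe that the theorem is a synthesis of properties already established for $\mc{L}(M)$, so the proof amounts to checking that each defining ingredient of a propositional system is in place and, where the wording differs, that the statements genuinely match. Following Piron, a propositional system is a complete atomistic orthomodular lattice satisfying the covering law. First I would recall, from the remarks immediately following the definition of $\mc{L}(M)$, that it is a complete lattice (being the lattice of closed sets of a closure operator), that it is atomistic, and that it is an ortholattice with $\cdot\,^\bot$ as orthocomplementation. Orthomodularity is exactly the content of the Orthomodularity proposition. Thus only the covering law remains to be addressed.

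For the covering law I would first pin down the atoms of $\mc{L}(M)$. Since the bottom element is $\emptyset^\bbot = \emptyset$ (by anti-reflexivity, $E^\bot = \emptyset$), applying the Atom-covering proposition with $F = \emptyset$ shows that each $\{x\}^\bbot$ covers the bottom element and is therefore an atom; conversely, atomisticity together with $F^\bbot = \bigl(\bigcup_{x \in F}\{x\}\bigr)^\bbot = \bigvee_{x\in F}\{x\}^\bbot$ shows that every element, hence every atom, is a join of such singleton closures, so the atoms are precisely the $\{x\}^\bbot$.

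Then I would translate the Atom-covering proposition into the lattice-theoretic covering law. Taking an arbitrary $P \in \mc{L}(M)$ and an atom $p$ with $p \not\le P$, I would write $P = F^\bbot$ (one may take $F = P$, as $P$ is closed) and $p = \{x\}^\bbot$. The condition $p \not\le P$ is then equivalent to $x \notin F^\bbot$, and the join is $P \vee p = (F \cup \{x\})^\bbot = (F + x)^\bbot$. The Atom-covering proposition states exactly that $(F + x)^\bbot$ covers $F^\bbot$, i.e.\ that $P \vee p$ covers $P$, which is the covering law. Assembling completeness, atomisticity, the ortholattice structure, orthomodularity, and the covering law then yields that $\mc{L}(M)$ is a propositional system.

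I do not foresee a genuine obstacle here, since every hard step has already been carried out in the preceding propositions; the only care needed is the bookkeeping of the final paragraph, namely verifying that the identification of atoms with singleton closures, the equivalence $p \not\le P \iff x \notin F^\bbot$, and the join formula $P \vee p = (F + x)^\bbot$ all line up, so that the Atom-covering proposition delivers the full covering law rather than a weaker special case.
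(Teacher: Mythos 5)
Your proposal is correct and matches the paper's approach: the paper treats this theorem as a direct synthesis of the preceding results (completeness, atomisticity, the ortholattice structure, the Orthomodularity proposition, and the Atom-covering proposition) and offers no further argument. The extra bookkeeping you supply --- identifying the atoms of $\mc L(M)$ with the singleton closures $\{x\}^\bbot$ and translating the Atom-covering proposition into the lattice-theoretic covering law --- is accurate and simply makes explicit what the paper leaves implicit.
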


\ 

Conversely, given a propositional system $S$, we define the associated orthoset $\mc O(S)$ made of the atoms of $S$ with orthogonality relation $ p \ortho q \iff p \leq q^\bot$. Moreover, for $F \in S$, let $\mathrm{At}(F)$ denote the set of atoms contained in $F$:
$$ \mathrm{At}(F) = \{ a \in \mc O(S) \mathbin\vert a \leq F \}. $$

\begin{proposition}
If $S$ is a propositional system, then $\mc O(S)$ is an orthomatroid.
\end{proposition}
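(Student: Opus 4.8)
The plan is to verify the three defining requirements of an orthomatroid in turn — the orthoset axioms, the Exchange Property, and the Straightening Property — by translating each condition from $\mc O(S)$ into the lattice language of $S$. The cornerstone will be a dictionary between the bi-orthogonal closure in $\mc O(S)$ and the order of $S$.

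First I would check that $\mc O(S)$ is genuinely an orthoset. Symmetry of $\bot$ reduces to the fact that orthocomplementation in $S$ is an antitone involution, so that $p \leq q^\bot \iff q \leq p^\bot$. Anti-reflexivity reduces to $p \wedge p^\bot = 0$ together with $p \neq 0$ for an atom $p$, since $p \ortho p$ would force $p \leq p \wedge p^\bot = 0$. The key step is then to establish, for every $F \in \wp(\mc O(S))$, the identity
$$ F^\bbot = \mathrm{At}\Bigl( \textstyle\bigvee F \Bigr), $$
where $\bigvee F$ denotes the join in $S$ of the atoms belonging to $F$. This follows by first computing $F^\bot = \mathrm{At}\bigl((\bigvee F)^\bot\bigr)$ via the De Morgan law $\bigwedge_{y \in F} y^\bot = (\bigvee_{y \in F} y)^\bot$, then applying this twice and using atomisticity (so that $\bigvee \mathrm{At}(P) = P$) and the involutivity of $\cdot\,^\bot$. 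With this dictionary, closed subsets of $\mc O(S)$ correspond bijectively to elements of $S$; writing $a = \bigvee F$, for an atom $x$ one has $x \in F^\bbot \iff x \leq a$, and $(F+x)^\bbot = \mathrm{At}(a \vee x)$, while $y \in F^\bot \iff y \leq a^\bot$.

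Under this translation both matroid axioms become pure lattice statements. The Exchange Property becomes: for atoms $x,y$, if $x \leq a \vee y$ and $x \not\leq a$ then $y \leq a \vee x$. This is the standard consequence of the covering law: since $x \not\leq a$, the element $a \vee x$ covers $a$; and since $y \not\leq a$ (otherwise $x \leq a \vee y = a$), the element $a \vee y$ also covers $a$, so from $a < a \vee x \leq a \vee y$ the two joins coincide, giving $y \leq a \vee y = a \vee x$. The Straightening Property becomes: for $x \not\leq a$ there is an atom $y \leq a^\bot$ with $x \leq a \vee y$. Here I would set $c = (a \vee x) \wedge a^\bot$ and invoke orthomodularity applied to $a \leq a \vee x$ to obtain $a \vee c = a \vee x$. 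Since $c \neq 0$ (else $a \vee x = a$), atomisticity supplies an atom $y \leq c \leq a^\bot$, and then $a < a \vee y \leq a \vee c = a \vee x$; the covering law forces $a \vee y = a \vee x$, whence $x \leq a \vee y$ with $y \in F^\bot$ as required.

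I expect the main obstacle to be the Straightening Property, as it is the only place where orthomodularity is genuinely used — the Exchange Property needs only the covering law, and the orthoset axioms are immediate. The delicate point is producing a witness $y$ orthogonal to all of $F$: the projection $c = (a \vee x) \wedge a^\bot$ is the natural candidate, and orthomodularity is precisely what guarantees that adjoining an atom below $c$ recovers $a \vee x$, so that this witness does the job.
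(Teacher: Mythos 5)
Your proof is correct and follows essentially the same route as the paper: the Exchange Property from the covering law (both joins $a \vee x$ and $a \vee y$ cover $a$), and the Straightening Property from orthomodularity applied to $a \leq a \vee x$ plus atomisticity to extract the witness atom $y \leq (a\vee x)\wedge a^\bot$. The only difference is that you make explicit the dictionary $F^{\bbot} = \mathrm{At}\bigl(\bigvee F\bigr)$ before translating the axioms, whereas the paper works directly with elements of $S$ and establishes that correspondence only in the subsequent proposition; this is a welcome clarification but not a different argument.
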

\begin{proof} Let us first consider the exchange property, and let $F \in S$, and $x$ and $y$ be two atoms of $S$ such that $x \in \mathrm{At}(F \vee y) \setminus \mathrm{At}(F)$ (or, written in a lattice way, $x \leq F \vee y$ and $x \not \leq F$). Since $y$ is an atom not contained in $F$, $F \vee y$ covers $F$. But then $F < F \vee x \leq F \vee y$ so that $F \vee x = F \vee y$.

Now, regarding the straightening property, if $x \not \in \mathrm{At}(F)$, then considering orthomodularity, one~has
$$ F \vee x = F \vee \bigl((F \vee x) \wedge F^\bot\bigr).$$
A first consequence of this equality is that $(F \vee x) \wedge F^\bot$ contains at least one atom. Let $y$ be such an atom (so that $y \in F^\bot$). Then, since $F \vee x$ covers $F$, it follows from the exchange property that $F \vee x = F \vee y$ or, equivalently that $x \in F \vee y$.
\end{proof}

\begin{proposition}
If $S$ is a propositional system, then $\mc L(\mc O(S))$ is ortho-isomorphic (as a propositional system) to~$S$.
\end{proposition}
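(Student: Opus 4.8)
The plan is to show that the map $\At : S \to \mc L(\mc O(S))$ sending each $F \in S$ to its set of atoms $\At(F)$ is the desired ortho-isomorphism. The whole argument rests on a single computation: expressing the orthogonal-complement and closure operations of the orthoset $\mc O(S)$ in terms of the lattice structure of $S$.

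First I would establish the key lemma: for any subset $X \subseteq \mc O(S)$, one has $X^\bot = \At\bigl((\bigvee X)^\bot\bigr)$. Indeed, an atom $p$ lies in $X^\bot$ iff $p \ortho q$, that is $p \leq q^\bot$, for every $q \in X$, which is equivalent to $p \leq \bigwedge_{q \in X} q^\bot = (\bigvee X)^\bot$ by de Morgan, i.e. to $p \in \At\bigl((\bigvee X)^\bot\bigr)$. This is where the definition of $\bot$ on atoms and the de Morgan laws of the ortholattice do all the work.

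Next I would iterate this to compute the closure. Writing $F = \bigvee X$, the lemma gives $X^\bot = \At(F^\bot)$, and applying it once more yields $X^{\bbot} = \At\bigl((\bigvee \At(F^\bot))^\bot\bigr)$. Here atomisticity of $S$ enters crucially: since every element is the join of the atoms below it, $\bigvee \At(F^\bot) = F^\bot$, whence $X^{\bbot} = \At(F^{\bot\bot}) = \At(F) = \At(\bigvee X)$. This shows at once that every closed subset of $\mc O(S)$ has the form $\At(F)$ and, conversely, that each $\At(F)$ is closed (take $X = \At(F)$ and use $\bigvee \At(F) = F$). Hence $\At$ maps $S$ onto $\mc L(\mc O(S))$.

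It then remains to check that $\At$ is an order isomorphism commuting with orthocomplementation, and all of this follows from atomisticity together with the lemma. Injectivity and the equivalence $F \leq G \iff \At(F) \subseteq \At(G)$ both reduce to the identity $F = \bigvee \At(F)$: the forward implication is immediate, and the converse follows by taking joins, which simultaneously yields injectivity. Finally, $\At$ preserves orthocomplements, since the lemma applied to $X = \At(F)$ gives $\At(F)^\bot = \At(F^\bot)$. The only genuinely delicate point is the closure computation, where one must correctly invoke atomisticity to collapse $\bigvee \At(F^\bot)$ back to $F^\bot$; everything else is bookkeeping.
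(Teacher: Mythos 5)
Your proposal is correct and follows essentially the same route as the paper: the same key identity $X^{\bot} = \At\bigl((\bigvee X)^{\bot}\bigr)$, iterated and combined with atomisticity to identify $\mc L(\mc O(S))$ with $\{\At(F) \mid F \in S\}$. You simply spell out the final bookkeeping (injectivity, order and orthocomplement preservation) that the paper leaves as ``follows directly.''
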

\begin{proof}
For the sake of clarity, let ${\ \cdot\ }^{\bot_S}$ denote the orthogonal of $S$ and ${\ \cdot\ }^{\bot_M}$ the orthogonal of orthomatroid $\mc O(S)$.  
For every subset $F$ of $\mc O(S)$ (or, equivalently, for every set of atoms of $S$), one~has
$$F^{\bot_M} = \{ a \in \mc O(S) \mathbin\vert \fall {b \in F} a \ortho b \} = \bigl\{ a \in \mc O(S) \bigm\vert a \ortho \hbox{$\bigvee$} F \bigr\} = \At \Bigl( \bigl(\hbox{$\bigvee$} F\bigr)^{\bot_S}\Bigr), $$
so~that $ F^{\bbot_M} =  \At \Bigl( \bigl(\hbox{$\bigvee$} F\bigr)^{\bbot_S}\Bigr) $ and 
$$\mc L(\mc O(S)) = \bigl\{ F^{\bbot_M} \bigm\vert F \subseteq \mc O(S) \bigr\} = \biggl\{ \At \Bigl( \bigl(\hbox{$\bigvee$} F\bigr)^{\bbot_S} \Bigr) \biggm\vert F \subseteq \mc O(S) \biggr\}. $$ 
Finally, the atomisticity of $S$ implies that
$$ \mc L(\mc O(S)) = \{ \mathrm{At}(F) \mathbin\vert F \in S \}. $$
The rest of proof follows directly from this equality. 
\end{proof}

From these results, it is clear that the lattice associated to any orthomatroid is a propositional system, but also, conversely, that any propositional system $S$ is the lattice associated to an orthomatroid, namely $\mc O(S)$. This illustrates the fact that orthomatroids do exactly capture the structure of the set of atoms of a propositional system or, stated the other way, that propositional systems are exactly the lattices associated to orthomatroids (which were defined from a discussion on dimension and orthogonality).

Moreover, any orthomatroid of the form $\mc O(S)$ for $S$ a propositional system is \emph{simple}: for every atom $x \in \mc O(S)$, one has $\{x\}^\bbot = \{x\}$. This means that there is a bijection between $\mc O(S)$ and $\At(S)$. It is well known in matroid theory that every matroid can be \emph{simplified}, i.e. transformed into a simple matroid having the same associated lattice. Here, $\mc O(\mc L(M))$ is the simplification (up to orthoisomorphism) of an orthomatroid $M$. As a consequence, in the following, we will restrict ourselves, without loss of generality, to simple orthomatroids.

\section{A Representation Theorem for Orthomatroids}

Having studied the close relationship between orthomatroids and propositional system, we will now adapt Piron's Representation Theorem to orthomatroids. This theorem can be stated~as:

\begin{theorem}[Piron’s Representation Theorem]
Every irreducible propositional system of rank at least 4 is ortho-isomorphic to the lattice of (biorthogonally) closed subspaces of a generalized Hilbert space.
\end{theorem}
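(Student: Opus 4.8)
The plan is to recover a vector space from the lattice structure and then a Hermitian form from the orthocomplementation, following the classical coordinatization strategy. Write $S$ for the given irreducible propositional system of rank at least $4$, with orthocomplementation $P \mapsto P^\bot$, and recall that $S$ is complete, atomistic, orthomodular and satisfies the covering law. First I would forget the orthocomplementation and analyse $S$ purely as an ordered set. Atomisticity together with the covering law makes $S$ a geometric (matroid) lattice, and one shows that each interval $[0, a]$ with $a$ of finite rank is modular (the upward covering law coming from the matroid structure and the downward covering law obtained by transporting it through the dual automorphism $\cdot^\bot$ combine to yield modularity on finite elements). Hence the atoms of $S$, with the incidence induced by $\leq$, form an irreducible projective geometry $\mc G$ whose projective dimension equals the rank of $S$ minus one, and so is at least $3$.

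The decisive external input is then the fundamental theorem of projective geometry (the Veblen--Young coordinatization theorem): an irreducible projective geometry of dimension at least $3$ is automatically Desarguesian and is isomorphic to the geometry of rays of a left vector space $V$ over a division ring $K$, uniquely up to a semilinear isomorphism. This identifies the atoms of $S$ with the one-dimensional subspaces of $V$, and the finite joins and meets of $S$ with subspace sum and intersection. The irreducibility of $S$ is exactly what guarantees that one obtains a single connected geometry, and hence a single vector space rather than a product.

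Next I would reconstruct the form from $\bot$. The orthocomplementation is an order-reversing involution of $S$; restricted to $\mc G$ it sends points to hyperplanes and is therefore a correlation (a duality) of $\mathrm{PG}(V)$. By the representation theorem for correlations, every such duality is induced by a nondegenerate semibilinear form $\langle \cdot, \cdot \rangle$ on $V$ relative to an anti-automorphism $\theta$ of $K$, so that $M^\bot = \{\, x \in V \mid \langle x, m\rangle = 0 \text{ for all } m \in M \,\}$. The fact that $\cdot^\bot$ is an involution ($P^{\bbot} = P$) forces the form to be reflexive and $\theta$ to be an involutive anti-automorphism with $\langle \cdot, \cdot \rangle$ being $\theta$-Hermitian after a suitable rescaling, while the anti-reflexivity of the orthogonality, i.e. that no atom lies below its own orthocomplement, translates into definiteness: $\langle x, x \rangle \neq 0$ whenever $x \neq 0$.

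Finally, I would verify that $(K, \theta, V, \langle \cdot, \cdot\rangle)$ is a \emph{generalized Hilbert space}, which amounts to checking that $M \oplus M^\bot = V$ for every biorthogonally closed subspace $M$. This is precisely the translation, through the coordinatization, of the orthomodular law $P \leq Q \implies Q = P \vee (Q \wedge P^\bot)$ satisfied by $S$. The assignment $P \mapsto \{\, x \in V \mid \langle x, y\rangle = 0 \text{ for all } y \in P^\bot\,\}$ is then a bijection onto the biorthogonally closed subspaces of $V$ that preserves order, joins, meets and $\cdot^\bot$, hence the desired ortho-isomorphism. The main obstacle is the passage from the purely lattice-theoretic orthocomplementation to an actual Hermitian form: one must both coordinatize (which is where the rank $\geq 4$ hypothesis is essential, since projective dimension at least $3$ is exactly what guarantees Desargues' theorem and hence a division ring rather than a non-Desarguesian plane) and then pin down the involution $\theta$ and the Hermitian symmetry by a careful choice of basis and normalization of the form. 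Extending the form coherently from the finite-dimensional sections to all of $V$, and confirming that lattice-closed elements correspond exactly to form-closed subspaces, is the remaining delicate bookkeeping.
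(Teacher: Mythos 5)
The paper does not prove this theorem: it is imported as Piron's classical result with citations to Piron and to Stubbe--van Steirteghem, so there is no in-paper proof to compare against. Your outline is precisely the standard coordinatization argument from those references (atomistic lattice with covering law and modular finite-rank intervals $\to$ irreducible projective geometry of dimension $\geq 3$, coordinatized over a division ring by Veblen--Young, then the Birkhoff--von Neumann representation of the polarity $\cdot^\bot$ by an anisotropic Hermitian form, with lattice orthomodularity translating into $M \oplus M^\bot = V$ for closed $M$), and as a sketch it is correct.
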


Let us recall that a generalized Hilbert space $(\mc H, \mb K, \,\cdot\,^\star, \bra \,\cdot\, \vert \,\cdot\, \ket)$ consists in a vector space $\mc H$ over a field $\mb K$ with an involution anti-automorphism $\,\cdot\,^\star~: \alpha \in \mb K \mapsto \alpha^\star$ and an orthomodular Hermitian form $\bra \,\cdot\, \vert \,\cdot\, \ket~: \mc H \times \mc H \rightarrow \mb K$ satisfying
\begin{align*}
\fall {x, y, z \in \mc H} \fall {\lambda \in \mb K} & \bra \lambda x + y \vert z \ket = \lambda \bra x \vert z \ket + \bra y \vert z \ket, \\
\fall {x, y \in \mc H} & \bra x \vert y \ket = \bra y \vert x \ket^\star, \\
\fall {S \in \wp(\mc H)} & S^\bot \oplus S^\bbot = \mc H
\end{align*}
where $S^\bot = \bigl\{x \in \mc H \bigm\vert \fall {y \in S} \bra x | y \ket = 0\bigr\}$. We invite the reader to consult \cite{Stubbe:2007} for more informations.

\ 

It is well known that the set of bi-orthogonally closed subsets of a generalized Hilbert space forms a propositional system. In terms of orthomatroids, this corresponds to the fact that the set $A(\mc H)$ of vector rays of a generalized Hilbert space forms a (simple) orthomatroid with orthogonality relation
$$ \mb K x \ortho_{\mc H} \mb K y \iff \bra x \vert y \ket = 0. $$

\medskip

In order to express Piron's Representation Theorem in terms of orthomatroids, we need to define the notion of irreducibility. Following \cite{Stubbe:2007} again, given a propositional system $S$, the binary relation $\sim$ defined on $\At(S)$~by
$$ \fall {x, y \in \At(S)} x \sim y \iff \bigl(x \neq y \implies \fexist {z \in \At(S)\setminus \{x, y\}} z \leq x \vee y\bigr) $$
is an equivalence relation on $\At(S)$. The equivalence classes of $\At(S)$ are then called irreducible components of $\At(S)$, and $S$ is said to be irreducible if it has a single irreducible component. 

Given a simple orthomatroid $(E, \bot)$, the previous equivalence relation can be reexpressed as
$$ \fall {x, y \in E} x \sim y \iff \Card \{x, y\}^\bbot \neq 2. $$
If $\bigl\{(E_i, \bot_i)\bigr\}_{i \in \mc I}$ denotes the irreducible components of $E$ by  (with $\bot_i$ being the restriction of $\bot$ to $E_i$), then $(E, \bot)$ can be seen as the disjoint union of its irreducible components (up to orthoisomorphism):
\begin{gather*}
E = \biguplus_{i \in \mc I} E_i = \bigcup_{i \in \mc I} \bigl\{(i,x) \bigm\vert x \in E_i \bigr\} \\ (i,x) \ortho (j, y) \iff \bigl(i \neq j \cor (i = j \cand x \ortho_i y)\bigr)
\end{gather*}

We are now able to express Piron's Representation Theorem in terms of orthomatroids, and we finally obtain the following representation theorem:

\begin{theorem}[Representation of Orthomatroids]
Every simple and irreducible orthomatroid $(E, \bot)$ of rank at least 4 is ortho\-isomorphic to the orthomatroid $(A(\mc H), \bot_{\mc H})$ associated to a generalized Hilbert space $(\mc H, \mb K, \,\cdot\,^\star, \bra \,\cdot\, \vert \,\cdot\, \ket)$.
\end{theorem}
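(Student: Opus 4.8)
The plan is to reduce the statement to Piron's Representation Theorem by transporting the problem through the associated propositional system and back again, using the two-way correspondence $M \mapsto \mc L(M)$ and $S \mapsto \mc O(S)$ established in the previous section. The essential geometric content — the actual construction of the Hilbert space — is delegated entirely to Piron's theorem; what remains is to verify that the hypotheses \emph{simple}, \emph{irreducible} and \emph{rank at least $4$} translate faithfully across the correspondence, and that the lattice-level orthoisomorphism produced by Piron descends to an orthoisomorphism of the underlying orthosets.

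First I would form the lattice $\mc L(M)$. By the earlier theorem, $\mc L(M)$ is a propositional system. I would then check that it inherits the two remaining hypotheses. For irreducibility, simplicity of $M$ gives a bijection between $E$ and the atoms of $\mc L(M)$ (namely $\mc O(\mc L(M))$), and under this bijection the orthomatroid relation $x \sim y \iff \Card \{x, y\}^\bbot \neq 2$ on $E$ coincides with the relation $\sim$ defining the irreducible components of $\mc L(M)$; hence $M$ irreducible forces $\mc L(M)$ to possess a single irreducible component. For the rank, the same bijection together with the atom-covering property shows that a maximal orthoindependent subset $\{x_1, \dots, x_n\}$ of $E$ produces a maximal chain $\emptyset^\bbot < \{x_1\}^\bbot < \dots < \{x_1, \dots, x_n\}^\bbot = E$ in $\mc L(M)$, each step being a covering; thus the rank of $M$ equals the height of $\mc L(M)$, which is therefore at least $4$.

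With these checks in place, Piron's Representation Theorem yields an orthoisomorphism of ortholattices $\varphi : \mc L(M) \to \mc C(\mc H)$, where $\mc C(\mc H)$ denotes the lattice of biorthogonally closed subspaces of a generalized Hilbert space $\mc H$, which is precisely $\mc L(A(\mc H), \bot_{\mc H})$. Since $\varphi$ preserves order and orthocomplementation it restricts to a bijection between the atoms of the two lattices, and in any atomistic ortholattice the orthogonality of atoms is recovered by $p \ortho q \iff p \leq q^\bot$; hence $\varphi$ restricts to an orthoisomorphism between the orthosets of atoms. As the atoms of $\mc C(\mc H)$ are exactly the vector rays of $\mc H$ with $\mb K x \ortho_{\mc H} \mb K y \iff \bra x \vert y \ket = 0$, we have $\mc O(\mc C(\mc H)) = (A(\mc H), \bot_{\mc H})$. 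Using simplicity of $M$ once more to identify $(E, \bot)$ with $\mc O(\mc L(M))$, I obtain the chain $(E, \bot) \cong \mc O(\mc L(M)) \cong \mc O(\mc C(\mc H)) = (A(\mc H), \bot_{\mc H})$, which is the desired orthoisomorphism.

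The main obstacle I anticipate is bookkeeping rather than genuine difficulty: one must confirm that the two notions of irreducibility (on propositional systems, and on simple orthomatroids via $\Card \{x,y\}^\bbot$) literally agree under the atom bijection, and that the orthomatroid rank defined through maximal orthoindependent subsets matches the lattice height implicit in Piron's statement. Once these identifications are pinned down, the remaining steps are a routine transport of structure along the equivalence between simple orthomatroids and propositional systems already proved above.
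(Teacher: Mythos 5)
Your proposal is correct and follows exactly the route the paper intends: the paper states this theorem without an explicit proof, treating it as an immediate consequence of the correspondence $M \mapsto \mc L(M)$, $S \mapsto \mc O(S)$ established in Section 3 together with Piron's Representation Theorem, which is precisely the reduction you carry out. Your explicit verification that simplicity, irreducibility and rank transfer across the correspondence, and that the lattice orthoisomorphism restricts to the atoms, fills in the details the paper leaves implicit.
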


\section{Conclusion}

In this article, we have presented a formalism based on the idea that dimension can, at a very primitive level, be defined as the common cardinality of a maximal collections of mutually orthogonal elements (which, for instance, can be seen as spatial directions). Using a generic definition of orthogonality, together with the theory of matroids which provides a algebraic structure permitting to define a convenient notion of dimension, and a last requirement regarding the existence of orthobases, we have obtained what we call \emph{orthomatroids} which provide a general framework for dealing with dimension in an orthogonality-based context. We then have shown that orthomatroids do actually exactly capture the structure of propositional systems (or, more precisely, of the set of atoms $\At(S)$ of a propositional system $S$, which is sufficient for reconstructing $S$) with the consequence that, in high enough dimension, irreducible matroids can be represented by generalized Hilbert spaces.

\ 

We insist on the fact that in this approach, the use of generalized Hilbert lattices (or, at least, of propositional systems) is entirely derived from simple and generic geometric assumptions. As a result, this suggests that instead of seeing the use of generalized Hilbert spaces (or, again, of propositional systems) in quantum physics as puzzling, it should in the contrary be seen as the most general (if not natural) way to model situations where dimension can be defined in terms of orthogonality, such as in quantum mechanics.

\bibliographystyle{abbrv}

\end{document}